\documentclass[12pt]{article}
\usepackage{latexsym}
\usepackage{amsmath}
\usepackage{amssymb}
\usepackage{amscd}
\usepackage{graphicx}
\setlength{\textheight}{8.625in} \setlength{\topmargin}{-1cm}
\setlength{\oddsidemargin}{+1mm} \setlength{\textwidth}{6.5in}
\newtheorem{theorem}{Theorem}[section]
\newtheorem{lemma}[theorem]{Lemma}
\newtheorem{corollary}[theorem]{Corollary}
\newtheorem{proposition}[theorem]{Proposition}
\newtheorem{definition}[theorem]{Definition}
\newtheorem{remark}[theorem]{Remark}
\newcommand{\EQ}{\begin{equation}}
\newcommand{\EN}{\end{equation}}
\newenvironment{proof}{\indent{\em Proof.  }}{\hfill{$\Box$\bigskip\newline}}

\newcommand{\rank}{\mbox{\rm rank}}

\newcommand{\wt}{\mbox{\rm wt}}

\newcommand{\ba}{{\bf a}}
\newcommand{\bb}{{\bf b}}

\newcommand{\bc}{{\bf c}}

\newcommand{\bx}{{\bf x}}

\newcommand{\bv}{{\bf v}}
\newcommand{\bu}{{\bf u}}
\newcommand{\bw}{{\bf w}}
\newcommand{\bo}{{\bf 0}}

\newcommand{\F}{\mathbb{F}}

\newcommand{\dd}{\displaystyle}


\newcommand{\syn}[1]{S_{\textbf{#1}}}
\newcommand{\pcm}[1]{H_{(#1)}}


\title{On lifting perfect codes\footnote{This work has been partially supported by the Spanish MICINN grants MTM2009-08435; PCI2006-A7-0616; the catalan grant 2009SGR1224 and also by the
Russian fund of fundamental researches 09 -01 - 00536.}}

\author{J. Rif\`{a} \\
Department of Information and Communications Engineering,\\
 Universitat Aut\`{o}noma de Barcelona,\\
  08193-Bellaterra, Spain\\
 (email:~josep.rifa@autonoma.edu)\\
\and V. A. Zinoviev\\Institute for Problems of Information
Transmission,\\ Russian Academy of Sciences,\\ Bol'shoi Karetnyi
per. 19, GSP-4, Moscow,\\ 127994, Russia (e-mail:\, zinov@iitp.ru).}

\begin{document}

\maketitle

\begin{abstract}
In this paper we consider completely regular codes, obtained from perfect (Hamming) codes by lifting the ground field. More
exactly, for a given perfect code $C$ of length $n=(q^m-1)/(q-1)$
over $\F_q$ with a parity check matrix $H_m$, we define a new code
$C_{(m,r)}$ of length $n$ over $\F_{q^r}$,~$r \geq 2$, with this
parity check matrix $H_m$. The resulting code $C_{(m,r)}$ is completely
regular with covering radius $\rho = \min\{r,m\}$.  We compute the
intersection numbers of such codes and, finally, we prove that Hamming
codes are the only codes that, after lifting the ground field, result in completely regular
codes.
\end{abstract}


\section{Introduction and Background}\label{int}

Let $\F_q=GF(q)$ be a Galois field with $q$ elements, where $q$ is a
prime power.  A $q$-ary linear {\em code} $C$ of length $n$ is a subset of $\F_q^{\,n}$. If
$C$ is a $k$-dimensional linear subspace of $\F^{\,n}_q$, then $C$ is a
$q$-ary {\em linear} code, denoted by $[n,k,d]_q$, where $d$ is the
{\em minimum distance} between any pair of codewords (the distance, or the Hamming distance, between two vectors is the number of coordinates they disagree). Given $\bv\in\F^n_q$, denote by $supp(\bv)$ the {\em support} of $\bv$, that is, the set of its nonzero coordinate positions. For $a \in \F_q$
and a given $C$ denote $a\,C = \{a\,\bc: ~\bc \in C\}$. For two codes
$A$ and $B$ of the same length let $A + B$ be its direct product, i.e.
$A + B = \{\ba + \bb:~\ba \in A,~\bb \in B\}$.

A matrix $M$ is called monomial if there is exactly one nonzero entry
in each row and column.
Let $C$ be a linear code of length $n$ over $\F_q$, a finite field of $q$ elements, where $q$ is a prime power. The automorphism
group $Aut(C)$ of $C$ usually consists of all
$n\times n$ monomial matrices $M$ over $\F_q$ such that $\bc M \in C$
for all $\bc \in C$. In those cases when $q$ is a power of a prime number, then $Aut(C)$ also
contains all the field automorphisms of $\F_q$ which preserve $C$. Note that, for
binary codes, the automorphism group $Aut(C)$ coincides with the subgroup of the
symmetric group $S_n$ consisting
of all $n!$ permutations of the $n$ coordinate positions which send $C$ into itself.

For a $q$-ary code $C$ with minimum distance $d$ denote by
$e = \lfloor (d-1)/2 \rfloor$ its {\em packing radius}.
Given any vector $\bv \in \F^n_q$, its {\em distance} to the code
$C$ is
\[
d(\bv,C)=\min_{\bx \in C}\{ d(\bv, \bx)\}
\]
and the {\em covering radius} of the code $C$ is
\[
\rho=\max_{\bv \in \F^n_q} \{d(\bv, C)\}.
\]
Clearly $e\leq\rho$ and $C$ is a {\em perfect} code, if
$e=\rho$.

For a given $q$-ary code $C$ with covering radius
$\rho=\rho(C)$ define
\[
C(i)~=~\{\bx \in \F_q^{\,n}:\;d(\bx,C)=i\},\;\;i=1,2,...,\rho.
\]

\begin{definition}\label{de:1.1}  A code $C$ is completely regular,
if for all $l\geq 0$ every vector $x \in C(l)$ has the same
number $c_l$ of neighbors in $C(l-1)$ and the same number
$b_l$ of neighbors in $C(l+1)$. Also, define
$a_l = (q-1){\cdot}n-b_l-c_l$ and note that $c_0=b_\rho=0$.
Define the intersection  array of $C$ as $(b_0, \ldots, b_{\rho-1}; c_1,\ldots, c_{\rho})$.
\end{definition}

The group $Aut(C)$ induces an action on the set of cosets of $C$ in the
following way: for all $\phi\in Aut(C)$ and for every vector
$\bv \in \F_q^{\,n}$ we have $\phi(\bv + C) = \phi(\bv) + C$.

In~\cite{Sole} it was introduced the concept of completely transitive
binary linear code and in~\cite{giudici} there is a generalization leading to the concept of coset-completely transitive code.

Let $C$ be a linear code over $\F_q$ with covering radius $\rho$, then
$C$ is coset-completely
transitive if $Aut(C)$ has $\rho +1$ orbits when acting on the cosets of $C$.
Since two cosets in the same orbit have the same weight distribution, it is clear that
any coset-completely transitive code is completely regular. Obviously, it may exist completely regular codes which are not coset-completely transitive.

To illustrate this last point we will show an easy example. Take the repetition code $C$ of length three over the field $\F_q$, where $q=2^4$. The code $C$ has 16 elements and it is not coset-completely transitive, but completely regular with intersection array  $( 45, 28; 1, 6 )$. Clearly, the automorphism group $Aut(C)$ as defined before has order $|Aut(C)|=15\cdot6\cdot4=360$ and if all the cosets in $C(2)$ are in the same orbit of $Aut(C)$ it should be that the number of cosets in $C(2)$ is a divisor of $360$. But, this is not true, since the number of cosets in $C(2)$ is $210$ (indeed, $210=16^2-1-15\cdot 3$).

It has been conjectured for a long time that if $C$ is a
completely regular code and $|C|>2$, then $e \leq 3$ \cite{Neum}.
Hence, the existing completely regular codes have a small error correcting
capability. With respect to the covering radius, Sol\'{e} in
\cite{Sole} uses the direct sum of $\ell$ copies of fixed perfect
binary codes of length $n$, with $e=1$, to construct infinite families of
binary completely regular codes of length $n{\cdot} \ell$ with
covering radius $\rho =\ell$. Thus, the
covering radius of the resulting code is growing linearly to infinity with
the length of the code. In~\cite{Rif2} it was described a method of constructing of linear completely regular codes with arbitrary covering radius, which is constant when the length is growing. Now, in the present paper we describe a new method, based on lifting perfect (Hamming) $q$-ary codes up to a new finite field over $q^r$. This method allow us to construct new completely regular codes with arbitrary covering radius, growing to the infinity with the length of the code, but not in a linear way like in~\cite{Sole} but in a logarithmic way. We compute the intersection array for all these new completely regular codes. Finally, we show that the Hamming codes are the only codes that, after lift the ground field, give completely regular codes

\section{Lifting perfect $q$-ary codes}

Let $C$ be a $q$-ary linear nontrivial code of length $n$ over the ground field $\F_q$ (hence, $1 < |C| < q^{n-1}$) with minimum distance $d\geq 3$. Let $H$ be a parity check matrix of $C$ and consider the code $C_{r}$ of length $n$ over the field $\F_{q^r}$ with the same parity check matrix $H$. The code $C_{r}$ is a well defined code, since the values in $H$ are from $\F_q \subset \F_{q^r}$.
We will say that code $C_r$ is obtained by lifting $C$.

The next Proposition shows a representation for the code obtained by lifting.

\begin{proposition}\label{rep}
Let $C_r$ be the code of length $n$ over the field $\F_{q^r}$, obtained by lifting the code $C$ over $\F_q$. Then:
$$C_r = C + \alpha C + \alpha^2 C + \cdots  + \alpha^{r-1} C,$$ where $\alpha$ is a primitive element in $\F_{q^r}$.
\end{proposition}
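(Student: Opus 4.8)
The plan is to exploit that $H$ has all its entries in $\F_q$, so that computing a syndrome over $\F_{q^r}$ is compatible with the decomposition of $\F_{q^r}$ as an $\F_q$-vector space. First I would record the field-theoretic input: since $\alpha$ is a primitive element of $\F_{q^r}$ we have $\F_{q^r}=\F_q(\alpha)$, and $[\F_{q^r}:\F_q]=r$ then forces $\{1,\alpha,\alpha^2,\ldots,\alpha^{r-1}\}$ to be an $\F_q$-basis of $\F_{q^r}$. Consequently every $y\in\F_{q^r}$ has a unique expansion $y=\sum_{i=0}^{r-1}\alpha^i y_i$ with $y_i\in\F_q$, and applying this coordinatewise shows that every $\bx\in\F_{q^r}^{\,n}$ can be written uniquely as $\bx=\sum_{i=0}^{r-1}\alpha^i\bx_i$ with each $\bx_i\in\F_q^{\,n}$.

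Next I would establish the two inclusions. For the inclusion $\supseteq$, suppose $\bx=\bc_0+\alpha\,\bc_1+\cdots+\alpha^{r-1}\bc_{r-1}$ with all $\bc_i\in C$; since $H\bc_i^T=\bo$ for each $i$, we get $H\bx^T=\sum_{i=0}^{r-1}\alpha^i\,H\bc_i^T=\bo$, so $\bx\in C_r$. For the reverse inclusion $\subseteq$, take $\bx\in C_r$ and write $\bx=\sum_{i=0}^{r-1}\alpha^i\bx_i$ as above. Because $H$ and every $\bx_i$ have entries in $\F_q$, each syndrome $H\bx_i^T$ is a vector over $\F_q$, and the condition $\bo=H\bx^T=\sum_{i=0}^{r-1}\alpha^i\,(H\bx_i^T)$ is, in each fixed coordinate $k$ of the syndrome, an $\F_q$-linear relation $\sum_{i=0}^{r-1}\alpha^i\,(H\bx_i^T)_k=0$ among the basis elements $1,\alpha,\ldots,\alpha^{r-1}$. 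Their linear independence over $\F_q$ forces $(H\bx_i^T)_k=0$ for all $i$ and all $k$, that is $H\bx_i^T=\bo$ and hence $\bx_i\in C$ for every $i$; thus $\bx\in C+\alpha C+\cdots+\alpha^{r-1}C$.

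The computation is essentially routine, and the only genuinely needed ingredient is the $\F_q$-linear independence of $1,\alpha,\ldots,\alpha^{r-1}$, which is exactly what primitivity of $\alpha$ (equivalently $\F_{q^r}=\F_q(\alpha)$) guarantees and what allows one to pass from a single vanishing $\F_{q^r}$-syndrome to the simultaneous vanishing of the $r$ separate $\F_q$-syndromes. I expect the only (minor) obstacle to be phrasing this independence step cleanly: one must separate the scalars $(H\bx_i^T)_k$ for each fixed syndrome row $k$ before the basis argument applies, rather than treating the vector relation as a whole.
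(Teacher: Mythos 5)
Your proof is correct, but it follows a different route from the paper's. The paper's proof is a two-line argument: it asserts the inclusion $C_r \subseteq C + \alpha C + \cdots + \alpha^{r-1}C$ ``by construction'' and then closes the gap by comparing cardinalities ($|C+\alpha C+\cdots+\alpha^{r-1}C|\leq |C|^{r}=q^{rk}=|C_r|$, the last equality because $H$ keeps rank $n-k$ over the extension field). You instead prove both inclusions directly, and the substance of your argument is the coordinatewise use of the $\F_q$-linear independence of $1,\alpha,\ldots,\alpha^{r-1}$ to split the single vanishing $\F_{q^r}$-syndrome into $r$ vanishing $\F_q$-syndromes; this is precisely the step the paper leaves implicit, so your write-up is the more self-contained of the two, while the paper's cardinality trick avoids having to touch the basis expansion at all (at the cost of needing to know $\dim_{\F_{q^r}}C_r=k$). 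One small inaccuracy worth fixing: primitivity of $\alpha$ (generating the multiplicative group $\F_{q^r}^{*}$) is strictly stronger than the condition $\F_{q^r}=\F_q(\alpha)$ that you actually use, so the word ``equivalently'' is not right --- though since primitivity implies the field-generation property, the proof is unaffected.
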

\begin{proof}
By construction, it is easy to see that $C_r \subset C + \alpha C + \alpha^2 C + \cdots + \alpha^{r-1} C$. The statement follows by computing the cardinality of $C_r$ and $C + \alpha C + \alpha^2 C + \cdots + \alpha^{r-1} C$.
\end{proof}

From~the previous Proposition it follows immediately that the minimum distance of the lifted code $C_{r}$ is the same that in code $C$ over $\F_q$. We can specify a little more in the following Lemma.

\begin{lemma}\label{lem:2.9}
Let $C_r$ be the code of length $n$ over the field $\F_{q^r}$, obtained by lifting the code $C$ over $\F_q$. Let $\bc$ be a codeword of $C_r$ of minimum weight. Then $\bc = \beta\,\bc'$ where $\beta \in \F_{q^r}$ and $\bc'$ is a codeword of minimum weight in $C$.
\end{lemma}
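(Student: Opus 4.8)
The plan is to reduce the statement to a single structural fact: the rank of a matrix with entries in $\F_q$ is unchanged when the matrix is viewed over the extension $\F_{q^r}$, so linear dependence among the columns of $H$ can be tested equally well over either field. Concretely, recall from the remark following Proposition~\ref{rep} that $C_r$ has minimum distance $d$, so a minimum weight codeword $\bc \in C_r$ has weight exactly $d$. Put $S = \supp(\bc)$, so $|S|=d$, and let $H_S$ be the submatrix of $H$ consisting of the columns indexed by $S$; its entries lie in $\F_q$. The parity-check identity $H\bc^T=\bo$ says precisely that the restriction $\bc|_S$ is a nonzero vector in $\Ker H_S$ computed over $\F_{q^r}$, i.e. the $d$ columns of $H_S$ are linearly dependent over $\F_{q^r}$.

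The key step is the rank invariance. Since the entries of $H_S$ lie in $\F_q$, the rank of $H_S$ over $\F_{q^r}$ equals its rank over $\F_q$ (both are governed by the vanishing of the minors, which are elements of $\F_q$). From the dependence over $\F_{q^r}$ we get $\rank H_S \le d-1$; on the other hand, because $C$ has minimum distance $d$, every set of $d-1$ columns of $H$ is linearly independent over $\F_q$, so $\rank H_S \ge d-1$. Hence $\rank H_S = d-1$ over both fields, and $\Ker H_S$ is one-dimensional over each. Let $\bc'|_S$ span the $\F_q$-kernel; extending it by zeros off $S$ yields a vector $\bc'$ with $H\bc'^T=\bo$, so $\bc' \in C$. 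Its support is contained in $S$, and since any proper subset of $S$ would force weight $<d$ on a nonzero codeword (contradicting the minimum distance of $C$), the support is exactly $S$, so $\bc'$ is a minimum weight codeword of $C$.

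Finally, because the $\F_q$-kernel and the $\F_{q^r}$-kernel of $H_S$ have the same dimension $1$, a spanning vector of the former also spans the latter; thus $\bc'|_S$ spans $\Ker H_S$ over $\F_{q^r}$ as well. As $\bc|_S$ lies in this kernel, there is a scalar $\beta \in \F_{q^r}$ with $\bc|_S = \beta\,\bc'|_S$, and since both vectors vanish outside $S$ this gives $\bc = \beta\,\bc'$, as required. The only delicate point in the whole argument is the rank-invariance fact; once it is in place, the minimum distance hypothesis forces the dependence among the $d$ chosen columns to be unique up to a scalar, and that uniqueness is exactly what pins $\bc$ to a single scalar multiple of the $\F_q$-codeword $\bc'$.
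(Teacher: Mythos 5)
Your proof is correct, but it follows a genuinely different route from the paper's. The paper leans on Proposition~\ref{rep}: it writes a minimum weight codeword of $C_r$ (implicitly) as $\bc = \bc_0 + \alpha\bc_1 + \cdots + \alpha^{r-1}\bc_{r-1}$ with $\bc_i \in C$, observes that $\supp(\bc)$ is the union of the $\supp(\bc_i)$, so every nonzero component is a minimum weight codeword of $C$ supported exactly on $\supp(\bc)$, and then reduces everything to the claim that $C$ cannot contain two linearly independent minimum weight codewords with the same support (proved by cancelling a common coordinate via $a\bv+b\bu$). You instead work entirely with the parity check matrix: the invariance of $\rank H_S$ under the field extension $\F_q \subset \F_{q^r}$, combined with the standard fact that minimum distance $d$ forces every $d-1$ columns of $H$ to be independent, pins $\Ker H_S$ down to a single line defined over $\F_q$. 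The two arguments are essentially dual --- the paper's ``no two independent codewords share a support'' is exactly your ``the kernel of $H_S$ is one-dimensional'' --- but yours has the merit of being more explicit and self-contained (the paper never actually writes out the reduction from $\bc$ to the components $\bc_i$, and your rank-invariance step also cleanly re-proves that lifting preserves the minimum distance, which the paper asserts separately in the remark after Proposition~\ref{rep}). The paper's version is shorter and reuses the representation of $C_r$ already established, which fits its narrative; either proof is acceptable.
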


\begin{proof}
From~Proposition~\ref{rep} is easy to see that the minimum distance in $C_r$ is less or equal to the minimum distance in $C$. Vice versa, the argument is reduced to show that there are not linearly independent minimum weight vectors in $C$ with the same support. Hence, take two independent minimum weight vectors $\bv, \bu \in C$ with the same support. Now, notice that it is possible to compute the vector $a\bv + b\bu$ with appropriate values $a, b \in \F_q$ such that the support of the new vector $a\bv + b\bu$ is strictly less. Hence, the weight of  $a\bv + b\bu \in C$ is less than the minimum weight of $C$, which is contradictory.
\end{proof}

Denote by $\pcm{m,q}$ the parity check matrix of a perfect Hamming
$[n,k,3]_q$-code over $\F_q$, where $n=(q^m-1)/(q-1)$.
Let $\{\xi_0=0,\xi_1=1,\ldots,\xi_{q-1}\}$ denote the elements of
$\F_q$. Then the matrix $\pcm{m,q}$  can be expressed, up to equivalence,
through the matrix $\pcm{m-1,q}$ as follows \cite{Sem3}:
\[
 \pcm{m,q} = \left[
\begin{array}{cc|c|c|c|c}
&\,1\,&\,0\cdots 0\,&\,1\cdots 1\,&\,\cdots\,&\,\xi_{q-1}\cdots
 \xi_{q-1}\,\\\hline
&\,{\bf 0}\, &\pcm{m-1,q}\,&\,\pcm{m-1,q}\,&\,\cdots \,&\,\pcm{m-1,q}\,\\
\end{array}\right],
\]
where ${\bf 0}$ is the zero column and where $\pcm{1,q} = [1]$.

Let $C$ be the Hamming code over $\F_q$ with parity check matrix
$\pcm{m,q}$. Denote by $C_{(m,r)}$ the lifting $C$,
i.e. $C_{(m,r)}$ is the code over $\F_{q^r}^n$ with parity check matrix $\pcm{m,q}$.

First we find the covering radius for $C_{(m,r)}$.
Take any vector $\bv$ in the ambient space of $C_{(m,r)}$, so $\bv \in \F_{q^r}^{\,n}$ and compute the syndrome $\bv \pcm{m,q}^T \in \F_{q^r}^m$. We can represent this syndrome $\bv \pcm{m,q}^T$ by a $r\times m$ matrix over $F_q$, taking as columns the representation of the elements in $\F_{q^r}$ using a fixed basis in $\F_q$. Call $S_{\bv}$ this matrix and denote by $\rank(S_{\bv})$ its rank over $F_q$.

\begin{lemma}\label{lem:2.1}
Let $C_{(m,r)}$ be the code with parity check matrix $\pcm{m,q}$. Then, for any vector $\bv \in \F_{q^r}^{\,n}$ with weight $\wt(\bv)$ we have that $\rank(S_{\bv})\leq \wt(\bv)$.
\end{lemma}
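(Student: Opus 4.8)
The plan is to express the syndrome of $\bv$ explicitly in terms of the columns of $\pcm{m,q}$ that lie in the support of $\bv$, and then to read off the rows of $S_{\bv}$ from the $\F_q$-basis expansion. Write $w = \wt(\bv)$, let $\{i_1,\ldots,i_w\}$ be the support of $\bv$ with nonzero entries $v_{i_1},\ldots,v_{i_w}\in\F_{q^r}$, and denote the columns of $\pcm{m,q}$ by $\bh_1,\ldots,\bh_n\in\F_q^m$. Then the syndrome is $\bv\,\pcm{m,q}^T=\sum_{j=1}^w v_{i_j}\,\bh_{i_j}^T$, an element of $\F_{q^r}^m$ involving only the $w$ columns indexed by the support.

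Next I would pass to the $\F_q$-representation. Let $\{\alpha_1,\ldots,\alpha_r\}$ be the fixed $\F_q$-basis of $\F_{q^r}$ used to form $S_{\bv}$, and expand each entry as $v_{i_j}=\sum_{t=1}^r v_{i_j}^{(t)}\alpha_t$ with $v_{i_j}^{(t)}\in\F_q$. Substituting and collecting terms by basis element gives $\bv\,\pcm{m,q}^T=\sum_{t=1}^r \alpha_t\,\bs_t$, where $\bs_t=\sum_{j=1}^w v_{i_j}^{(t)}\,\bh_{i_j}^T\in\F_q^m$. Since the columns of $S_{\bv}$ are, by definition, the basis-coordinate vectors of the $m$ syndrome components, the $t$-th row of $S_{\bv}$ is precisely $\bs_t$.

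The conclusion is then immediate: each row $\bs_t$ of $S_{\bv}$ is an $\F_q$-linear combination of the $w$ fixed vectors $\bh_{i_1},\ldots,\bh_{i_w}$, so the entire row space of $S_{\bv}$ is contained in $\mathrm{span}_{\F_q}\{\bh_{i_1},\ldots,\bh_{i_w}\}$, a subspace of $\F_q^m$ of dimension at most $w$. Hence $\rank(S_{\bv})\leq w=\wt(\bv)$. The only point requiring care is the bookkeeping around the transposes and the observation that it is the rows, not the columns, of $S_{\bv}$ that live in the span of the support columns of $\pcm{m,q}$; beyond this there is no real obstacle, as the statement is essentially a restatement of the fact that each coordinate of the syndrome, over $\F_q$, is built from only $\wt(\bv)$ columns of the parity check matrix.
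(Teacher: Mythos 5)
Your proof is correct and complete. It is worth noting that it takes the ``row space'' side of the duality, whereas the paper argues (much more sketchily) from the ``column space'' side: the paper's proof says that $\rank(S_{\bv})$ is governed by the number of $\F_q$-linearly independent values among the coordinates of $\bv$ --- equivalently, since $\pcm{m,q}$ has entries in $\F_q$, every syndrome component is an $\F_q$-linear combination of the nonzero entries $v_{i_1},\ldots,v_{i_w}$, so the columns of $S_{\bv}$ lie in an $\F_q$-subspace of dimension at most $w$ (in fact at most $\min\{r,m\}$, which is how the paper phrases it). You instead expand each $v_{i_j}$ in the fixed basis and observe that every row $\bs_t$ of $S_{\bv}$ lies in $\mathrm{span}_{\F_q}\{\bh_{i_1},\ldots,\bh_{i_w}\}\subseteq\F_q^m$. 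The two bounds are dual and both immediate from your explicit formula $S_{\bv}$-rows $=\sum_j v_{i_j}^{(t)}\bh_{i_j}^T$; your version has the advantage of being fully explicit and rigorous (the paper's wording, taken literally, asserts $\rank(S_{\bv})=\min\{r,m\}$ for every nonzero $\bv$, which is false for, say, weight-one vectors, and its ``rank $1$'' claim for $\bv\in\F_q^n$ overlooks $\bv\in C$). Neither argument uses anything special about the Hamming code; the lemma holds for any parity check matrix over the ground field, as your last sentence correctly observes.
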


\begin{proof}
Assume that $\bv$ is a nonzero vector and also that all the coordinates of $\bv$ belong to the ground field $\F_q$. In this case, it is clear that $\syn{v}$ has rank $1$. In the general case, as $\pcm{m,q}$ is the parity check matrix of a Hamming code over the ground field $\F_q$, we find that the rank of $\syn{v}$ depends on the number of independent coordinates of  the vector $\bv$ over $\F_q$. When $r \leq m$ we have that this number $r$ is exactly the rank and, when $r>m$ the rank is $m$. In any case, $\rank(S_{\bv})= min\{r,m\} \leq \wt(\bv)$.
\end{proof}

\begin{proposition}\label{dist}
For any vector $\bv \in \F_{q^r}^{\,n}$ we have that $d(\bv,C_{(m,r)})=\rank(\syn{v})$.
\end{proposition}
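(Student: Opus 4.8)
The plan is to sandwich $d(\bv,C_{(m,r)})$ between $\rank(\syn{v})$ from below and from above, the lower bound being essentially free from Lemma~\ref{lem:2.1} and the upper bound requiring an explicit coset leader. First I would recall that $d(\bv,C_{(m,r)})$ equals the minimum weight occurring in the coset $\bv+C_{(m,r)}$, and that all vectors of a fixed coset share the same syndrome, hence the same associated matrix $\syn{v}$. For the lower bound, if $\be$ is a minimum-weight vector of the coset then $\syn{e}=\syn{v}$, and Lemma~\ref{lem:2.1} gives $d(\bv,C_{(m,r)})=\wt(\be)\ge\rank(\syn{e})=\rank(\syn{v})$.

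The substantial half is the reverse inequality: I must exhibit a vector of weight exactly $\rho:=\rank(\syn{v})$ whose syndrome is $\bv\,\pcm{m,q}^T$. Fixing the $\F_q$-basis $1,\alpha,\ldots,\alpha^{r-1}$ of $\F_{q^r}$, I would write the syndrome as $\sum_{l=0}^{r-1}\alpha^l\bs_l$ with $\bs_l\in\F_q^m$; by the very definition of $\syn{v}$ its rows are the $\bs_l$, so $\rho=\dim_{\F_q}\la\bs_0,\ldots,\bs_{r-1}\ra$. Let $V$ denote this span. The key observation is that $\pcm{m,q}$ is a Hamming parity-check matrix, so its columns represent each projective point of $\F_q^m$ exactly once; consequently every one-dimensional $\F_q$-subspace of $V$ is spanned by a unique column of $\pcm{m,q}$, and I can select a basis $\bh_{j_1},\ldots,\bh_{j_\rho}$ of $V$ consisting of $\rho$ distinct columns of $\pcm{m,q}$. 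Expanding $\bs_l=\sum_{k=1}^{\rho}\mu_{lk}\,\bh_{j_k}$ and regrouping the powers of $\alpha$, the syndrome becomes $\sum_{k=1}^{\rho}\beta_k\,\bh_{j_k}$ with $\beta_k=\sum_{l}\mu_{lk}\alpha^l$. I then set $\be$ to be the vector with entry $\beta_k$ in position $j_k$ and zeros elsewhere; it has the required syndrome and weight at most $\rho$.

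It remains to verify that the weight is exactly $\rho$, i.e. that no $\beta_k$ vanishes. Since $1,\alpha,\ldots,\alpha^{r-1}$ is an $\F_q$-basis, $\beta_k=0$ would force the entire $k$-th column $(\mu_{0k},\ldots,\mu_{r-1,k})$ of the coefficient matrix $(\mu_{lk})$ to be zero. But that matrix has rank $\rho$, because the coordinate vectors of $\bs_0,\ldots,\bs_{r-1}$ in the basis $\{\bh_{j_k}\}$ span $\F_q^{\rho}$; hence none of its $\rho$ columns can vanish. Thus $\wt(\be)=\rho$, giving $d(\bv,C_{(m,r)})\le\rho$, and combined with the lower bound this yields the equality.

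The main obstacle is precisely this upper-bound construction: using the projective (Hamming) structure of $\pcm{m,q}$ to realize an arbitrary $\rho$-dimensional syndrome subspace $V$ by exactly $\rho$ of its columns, and then confirming via the rank argument that all coefficients $\beta_k$ are nonzero, so the constructed coset leader genuinely attains weight $\rho$ rather than something smaller. Everything else is bookkeeping with the decomposition $\syn{v}=\sum_l\alpha^l\bs_l$ and an appeal to Lemma~\ref{lem:2.1}.
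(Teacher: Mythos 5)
Your proof is correct, but the upper-bound half takes a genuinely different route from the paper. The paper first normalizes the syndrome matrix: it writes $P=A\,S_{\bv}\,B$ with $A,B$ nonsingular over $\F_q$ and $P$ diagonal of rank $l$, then argues that $B$ corresponds (via the uniqueness of the Hamming parity-check matrix up to monomial equivalence) to a permutation-plus-scaling of the $n$ coordinates and that $A$ is a weight-preserving $\F_q$-linear change of basis of $\F_{q^r}$, so that a weight-$l$ vector can be read off from the diagonal form $P$ and transported back. You instead construct the coset leader in place: you decompose the syndrome as $\sum_l\alpha^l\bs_l$ with the $\bs_l$ the rows of $S_{\bv}$, use the projective-point property of the columns of $\pcm{m,q}$ to choose $\rho$ columns forming a basis of $V=\la\bs_0,\ldots,\bs_{r-1}\ra$, and place the regrouped coefficients $\beta_k$ on those positions; your rank argument showing no $\beta_k$ vanishes is sound (and in fact only $\wt(\be)\le\rho$ is needed for the upper bound, since Lemma~\ref{lem:2.1} already forces equality). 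What each approach buys: the paper's normalization sets up exactly the machinery (the matrices $A$, $B$ and the induced automorphisms) that is reused in the proof of Theorem~\ref{theo:2.1} on $(r,q)$-complete transitivity, whereas your construction is more self-contained and explicit, avoiding the appeal to the uniqueness of the Hamming matrix and the somewhat terse ``it is easy to find a vector $\bw$'' step, at the cost of not preparing that later argument. Both halves of your proof rest on the same two pillars as the paper's — Lemma~\ref{lem:2.1} below and the Hamming (all projective points as columns) structure above — so the result is a valid, arguably cleaner, alternative proof.
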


\begin{proof}
Matrix $S_{\bv}$ can be transformed into a diagonal matrix $P$ with only $l=\rank(S_{\bv})$ nonzero diagonal values. The transformation can be carried out by multiplying by the corresponding nonsingular matrices over $\F_q$. Hence, we can write:
\begin{equation}\label{eq:p}
P=A\,S_{\bv}B =A\,v\pcm{m,q}^TB,
\end{equation}
where $A$ is a $r\times r$ nonsingular matrix over $\F_q$ and $B$ is a $m\times m$ nonsingular matrix over $\F_q$.

We know that a Hamming parity check matrix is unique, up to permutations and scalar multiplications of columns. Hence, matrix $\pcm{m,q}^TB$ in~(\ref{eq:p}) represents a permutation of the $n$ coordinates, followed by scalar multiplications. Matrix $A$ makes a linear transformation in the vector space $\F_q^r$ over the ground field $\F_q$ (as we already mentioned, we use a representation of elements of $\F_{q^r}$ as elements of $\F_q^r$ by using a fixed basis). It is easy to see that the weight of the vectors $\bv \in \F_{q^r}^n$ is not changed after doing these above transformations. Looking at the matrix $P$ in~(\ref{eq:p}), it is easy to find a vector $\bw \in \F_{q^r}^n$ of weight $l=\rank(\syn{v})$ with syndrome equal to $P$, so $\syn{w}=P$. From Lemma~\ref{lem:2.1} we conclude that $d(\bv,C_{(m,r)})=\rank(\syn{v})$.\end{proof}

\begin{corollary}\label{coro:2.3}
The covering radius of code $C_{(m,r)}$ is $\rho=min\{r,m\}$.
\end{corollary}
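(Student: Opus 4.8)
The plan is to turn the computation of the covering radius into a question about the maximal rank of a syndrome matrix, and then bound that rank on both sides. By definition $\rho = \max_{\bv \in \F_{q^r}^{\,n}} d(\bv, C_{(m,r)})$, and Proposition~\ref{dist} already tells us that $d(\bv, C_{(m,r)}) = \rank(\syn{v})$ for every $\bv$. Hence the entire problem reduces to determining $\max_{\bv}\rank(\syn{v})$, the largest rank that a syndrome matrix $\syn{v}$ can attain as $\bv$ ranges over the ambient space.

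For the upper bound I would simply invoke the shape of the matrix. By its very construction $\syn{v}$ is an $r \times m$ matrix over $\F_q$, so its rank cannot exceed $\min\{r,m\}$, regardless of $\bv$. This immediately gives $\rho \leq \min\{r,m\}$ and requires nothing beyond the dimensions of $\syn{v}$.

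For the matching lower bound I would exhibit a single vector attaining the value $\ell = \min\{r,m\}$. Recall that the columns of a Hamming parity check matrix $\pcm{m,q}$ form a set of representatives of all one-dimensional subspaces of $\F_q^m$; in particular, in the normalized recursive form displayed above, each of the $m$ standard basis vectors appears (up to a scalar) as a column. I would choose $\ell$ coordinate positions whose columns are the first $\ell$ standard basis vectors, set $\bv$ to be $\al^{0}, \al^{1}, \ldots, \al^{\ell-1}$ on those positions and $0$ elsewhere, where $\al$ is a primitive element of $\F_{q^r}$. Then the syndrome $\bv\,\pcm{m,q}^T$ has $k$-th entry $\al^{k-1}$ for $1 \leq k \leq \ell$ and $0$ otherwise. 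Since $\{1, \al, \ldots, \al^{r-1}\}$ is an $\F_q$-basis of $\F_{q^r}$ and $\ell \leq r$, these $\ell$ entries are linearly independent over $\F_q$, so their $\F_q$-column representations are $\ell$ independent columns of $\syn{v}$, whence $\rank(\syn{v}) = \ell$.

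The only delicate point, and the main obstacle, is the lower bound: one must confirm both that the needed columns are genuinely present in $\pcm{m,q}$ (guaranteed by the structure of a Hamming matrix) and that $\F_q$-independence survives the passage from the $\F_{q^r}$-valued syndrome to its $r \times m$ representation $\syn{v}$ — which is precisely why the values are taken to be distinct powers of $\al$ from the basis $\{1, \al, \ldots, \al^{r-1}\}$. Combining the two bounds yields $\rho = \min\{r,m\}$, as claimed.
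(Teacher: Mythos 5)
Your proof is correct and follows essentially the same route as the paper: Proposition~\ref{dist} reduces the covering radius to the maximal possible rank of the $r\times m$ syndrome matrix $\syn{v}$, which is at most $\min\{r,m\}$ by its dimensions. The paper dismisses the attainability of this bound as straightforward, whereas you supply an explicit vector (supported on columns representing standard basis vectors, with values $1,\al,\ldots,\al^{\ell-1}$) that achieves it --- a welcome extra detail, but not a different method.
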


\begin{proof}
It is straightforward to see that $\rho=min\{r,m\}$, since the distance of any vector $\bv \in \F_{q^r}^n$ to $C_{(m,r)}$ is given by $rank(S_{\bv})$ and $S_{\bv}$ is a $r\times m$ matrix over $F_q$.
\end{proof}

The second important goal is to prove the completely regularity for these codes over $\F_{q^r}$, which have $\pcm{m,q}$ as a parity check matrix.

To prove the condition of completely regularity we start by slightly modifying the usual definition of completely transitive codes~\cite{Sole,giudici}, and
introducing the concept of $(r,q)$-completely transitivity, which will be helpful throughout this paper. But, first, we give a different and convenient definition of the automorphism group of a code $C_r$ over $\F_{q^r}$, which we will denote by $Aut_{(r,q)}(C)$.

\begin{definition}
Define the automorphism group $Aut_{(r,q)}(C_r)$ of a code $C_r$ over $\F_{q^r}$ as all the
$n\times n$ monomial matrices $M$ over $\F_{q^r}$ such that $\bc M \in C_r$
for any $\bc \in C_r$; all the field automorphisms of $\F_{q^r}$ which preserve $C$ and, moreover, all the vector space morphisms of $\F_{q^r}$ over $\F_q$ which preserve $C_r$.
\end{definition}

Finally, we define the concept of $(r,q)$-completely transitivity.

\begin{definition}
Let $C_r$ be a $[n,k,d]_{q^r}$ code with covering radius $\rho$. We will say that $C_r$ is $(r,q)$-completely transitive if $Aut_{(r,q)}(C_r)$ has $\rho +1$ orbits when acting on the cosets of $C_r$.
\end{definition}

\begin{proposition}\label{prop:1.2}
If $C_r$ is a $(r,q)$-completely transitive code, then $C_r$ is completely regular.
\end{proposition}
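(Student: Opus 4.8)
The plan is to exploit the fact that every element of $Aut_{(r,q)}(C_r)$ is a Hamming isometry of $\F_{q^r}^{\,n}$ that fixes $C_r$ setwise, so the $\rho+1$ orbits on cosets are forced to coincide with the $\rho+1$ distance classes $C(0),\ldots,C(\rho)$, and then to upgrade this coset-transitivity to a statement about individual vectors by composing with translations by codewords.

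First I would check that each kind of generator of $Aut_{(r,q)}(C_r)$ — a monomial matrix over $\F_{q^r}$, a field automorphism of $\F_{q^r}$ applied coordinatewise, or an $\F_q$-linear bijection of $\F_{q^r}$ applied coordinatewise — preserves Hamming weight (a coordinate is zero if and only if its image is) and fixes $C_r$. Consequently, for any such $\phi$ and any $\bv$ we have $d(\phi(\bv),C_r)=d(\phi(\bv),\phi(C_r))=d(\bv,C_r)$, so $\phi$ maps each class $C(i)$ onto itself; in particular the induced action on cosets sends each coset into a coset of the same distance class, whence every orbit of cosets lies inside a single $C(i)$. Since $C(0)=C_r$ and, by changing one position of a vector at distance $\rho$ to agree with a nearest codeword, each class $C(0),\ldots,C(\rho)$ is nonempty, and since there are exactly $\rho+1$ orbits, a pigeonhole count forces each $C(i)$ to be a single orbit. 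Thus $Aut_{(r,q)}(C_r)$ acts transitively on the cosets contained in each $C(l)$.

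The main obstacle is that this transitivity is only on cosets, whereas Definition~\ref{de:1.1} requires the neighbour counts $b_l,c_l$ to be constant on every individual vector of $C(l)$. I would bridge this gap with translations: for a codeword $\bc\in C_r$ the map $\bv\mapsto\bv+\bc$ is an isometry that fixes $C_r$, fixes each class $C(i)$, and fixes each coset setwise, so any two vectors of the same coset have identical numbers of neighbours in $C(l-1)$ and in $C(l+1)$. Although such translations are not among the generators listed for $Aut_{(r,q)}(C_r)$, recognizing that they are exactly the extra isometries needed is the crux of the argument.

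Finally I would combine the two ingredients. Given $\bx,\bx'\in C(l)$, transitivity on cosets yields $\phi\in Aut_{(r,q)}(C_r)$ with $\phi(\bx)+C_r=\bx'+C_r$, so $\phi(\bx)=\bx'+\bc$ for some $\bc\in C_r$. Because $\phi$ is a distance-preserving map fixing every $C(i)$, it carries the neighbours of $\bx$ bijectively onto those of $\phi(\bx)$ while preserving their classes, so $\bx$ and $\phi(\bx)$ have equal counts; by the translation remark, $\phi(\bx)=\bx'+\bc$ and $\bx'$ have equal counts as well. Hence the numbers of neighbours in $C(l-1)$ and $C(l+1)$ depend only on $l$, which is precisely the assertion that $C_r$ is completely regular.
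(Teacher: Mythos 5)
Your proposal is correct, and it reaches the conclusion by a somewhat more self-contained route than the paper. The shared core is the observation that every generator of $Aut_{(r,q)}(C_r)$ is a weight-preserving bijection fixing $C_r$ setwise; the paper stops essentially there, arguing only that cosets in the same orbit have the same weight distribution and then relying (implicitly, via the discussion of coset-complete transitivity in the introduction) on the known fact that a linear code whose cosets fall into $\rho+1$ weight-distribution classes is completely regular. You instead derive complete regularity from first principles: you identify the $\rho+1$ orbits with the $\rho+1$ distance classes by a pigeonhole argument (using the discrete intermediate-value observation that every $C(i)$ is nonempty), and you bridge the gap from transitivity on cosets to constancy of the neighbour counts $b_l,c_l$ on individual vectors by composing the automorphism with a translation by a codeword, which is an isometry fixing every coset setwise. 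This translation step is exactly the detail the paper elides when it says it is ``enough to show that all the cosets in the same orbit have the same weight distribution''; making it explicit is a genuine improvement in rigour, at the cost of a longer argument. One small caveat: your argument needs the ``vector space morphisms of $\F_{q^r}$ over $\F_q$'' in the definition of $Aut_{(r,q)}$ to be bijective (so that nonzero coordinates map to nonzero coordinates); this is the intended reading, and you state it correctly, but it is worth flagging since the paper's wording says only ``morphisms.''
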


\begin{proof}
It is enough to show that all the cosets in the same orbit have the same weight distribution. It is straightforward to see this, since an automorphism of $Aut_{(r,q)}(C_r)$ acts over the vectors by permuting coordinates, multiplying by a nonzero scalar and linearly modifying a coordinate. Indeed, each coordinate is an element of $\F_{q^r}$ and could be seen as a vector of $r$ coordinates in $\F_q$; after the modification we obtain another coordinate consisting in a change of basis in $\F_q^{\,r}$ as a vector space over $\F_q$. In all the cases a nonzero coordinate always gives a nonzero coordinate and the weight of the initial vector is maintained.
\end{proof}

Going again to the example we introduced in Section~\ref{int}, we see that now
$$|Aut_{(4,2)}(C)| = 360\cdot 20160 = 7257600,$$ where $20160=15\cdot 14\cdot 12\cdot 8$ comes from the number of vector space morphisms of $\F_{2^4}$ over $\F_2$ (or, equivalently, from the order of the general linear group $GL_4(\F_2)$ of degree $4$ over $\F_2$). As we will see, applying next Theorem, the code in that example is $(r,q)$-completely transitive, and so, completely regular.

\begin{theorem}\label{theo:2.1}
Let $C_{(m,r)}$ be the code of length $n$ over the field $\F_{q^r}$ with parity check matrix $\pcm{m,q}$, where $n=\frac{q^m-1}{q-1}$. The code $C_{(m,r)}$ is  $(r,q)$-completely transitive and, hence, completely regular with covering radius $\rho = \min\{r,m\}$.
\end{theorem}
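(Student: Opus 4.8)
The plan is to reduce the entire statement to a single fact about matrix equivalence over $\F_q$. By Corollary~\ref{coro:2.3} the covering radius is already $\rho=\min\{r,m\}$, and by Proposition~\ref{prop:1.2} complete regularity follows once $(r,q)$-complete transitivity is established; so the real work is to show that $Aut_{(r,q)}(C_{(m,r)})$ has exactly $\rho+1$ orbits on the cosets. Since cosets correspond bijectively to syndromes, and since by Proposition~\ref{dist} the distance of a coset to the code equals $\rank(S_{\bv})$, where $S_{\bv}$ is the $r\times m$ matrix over $\F_q$ representing the syndrome, two cosets at different distances automatically lie in different orbits. It therefore suffices to prove that, for each fixed $l$ with $0\le l\le\rho$, the group acts transitively on the set of cosets whose syndrome matrix has rank $l$. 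The target fact from linear algebra is that two $r\times m$ matrices over $\F_q$ have the same rank if and only if one is carried to the other by $S\mapsto A\,S\,B$ for suitable $A\in GL_r(\F_q)$ and $B\in GL_m(\F_q)$ (rank normal form); so I only need to realize the full left action of $GL_r(\F_q)$ and the full right action of $GL_m(\F_q)$ on $S_{\bv}$ by genuine elements of $Aut_{(r,q)}(C_{(m,r)})$.

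First I would handle the left action. The vector-space morphisms of $\F_{q^r}$ over $\F_q$ form $GL_r(\F_q)$ in the fixed basis, and each one, applied coordinatewise to $\bv$, is an automorphism of $C_{(m,r)}$: because $\pcm{m,q}$ has entries in $\F_q$ and such a morphism $g$ is $\F_q$-linear, we have $g(\bv)\,\pcm{m,q}^T=g(\bv\,\pcm{m,q}^T)$, so $g$ fixes the zero-syndrome set $C_{(m,r)}$ and sends the syndrome of $\bv$ to $g$ applied coordinatewise to that syndrome. On the matrix $S_{\bv}$ this is exactly left multiplication by the corresponding element of $GL_r(\F_q)$, and every element of $GL_r(\F_q)$ arises in this way.

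Next I would handle the right action, which I expect to be the main obstacle, since it is where the special structure of the Hamming matrix is used. The columns $h_1,\ldots,h_n$ of $\pcm{m,q}$ form a complete set of representatives of the projective points of $PG(m-1,q)$. Hence for any $B\in GL_m(\F_q)$ the columns $B h_1,\ldots,B h_n$ are again a full set of projective representatives, so $B\,\pcm{m,q}=\pcm{m,q}\,M^T$ for a monomial matrix $M$ over $\F_q$ (a permutation of the columns together with nonzero $\F_q$-scalings). Such an $M$ lies in $Aut_{(r,q)}(C_{(m,r)})$, and it sends the syndrome of $\bv$ to $B$ times that syndrome; passing to the $r\times m$ representation, left multiplication of the $m$-coordinate syndrome vector by $B$ becomes right multiplication of $S_{\bv}$ by $B^T$. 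As $B$ ranges over $GL_m(\F_q)$ so does $B^T$, so the full right action of $GL_m(\F_q)$ is realized by monomial code automorphisms. The delicate points here are the correct bookkeeping of transposes and the verification that the projective argument produces a genuine monomial automorphism; the latter rests on the uniqueness of a Hamming parity-check matrix up to column permutation and scaling, as noted in the proof of Proposition~\ref{dist}.

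Finally I would assemble the pieces. Given two cosets at distance $l$, their syndrome matrices $S_1,S_2$ have equal rank $l$ by Proposition~\ref{dist}; by rank normal form there are $A\in GL_r(\F_q)$ and $B\in GL_m(\F_q)$ with $A\,S_1\,B=S_2$. Composing the left-action automorphism realizing $A$ with the right-action automorphism realizing $B$ yields an element of $Aut_{(r,q)}(C_{(m,r)})$ carrying the first coset to the second, so the group is transitive on each distance level. Since the levels $l=0,1,\ldots,\rho$ are precisely the $\rho+1$ orbits (each nonempty, as matrices of every rank up to $\min\{r,m\}$ occur), the code $C_{(m,r)}$ is $(r,q)$-completely transitive, and Proposition~\ref{prop:1.2} together with Corollary~\ref{coro:2.3} finishes the proof.
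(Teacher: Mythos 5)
Your proposal is correct and follows essentially the same route as the paper: reduce to $(r,q)$-complete transitivity via Corollary~\ref{coro:2.3} and Proposition~\ref{prop:1.2}, use Proposition~\ref{dist} to identify distance classes of cosets with rank classes of syndrome matrices, and realize the rank-normal-form equivalence $S\mapsto A\,S\,B$ by combining coordinatewise $\F_q$-linear maps (for $A\in GL_r(\F_q)$) with the monomial automorphisms arising from the projective-point description of $\pcm{m,q}$ (for $B\in GL_m(\F_q)$). Your version is merely more explicit than the paper's about separating the left and right actions and tracking the transposes.
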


\begin{proof}
Taking into account Corollary~\ref{coro:2.3} and Proposition~\ref{prop:1.2},
it is enough to show that the code $C_{(m,r)}$ is $(r,q)$-completely transitive. Choose any two vectors $\bv, \bw \in \F_{q^r}^{\,n}$ at the same distance from $C_{(m,r)}$, and show that there is an automorphism
$\phi \in Aut_{(r,q)}(C_{(m,r)})$ such that $\bw$ and $\phi(\bv)$ are in the same coset. That is, $\bw \in \phi(\bv) + C_{(m,r)}$ or, in other words, $\syn{\bw} = S_{\phi(\bv)}$.

From Proposition~\ref{dist} we have $\rank(\syn{\bv}) = \rank(\syn{\bw})$
and this means that there exist two nonsingular $r\times r$ and $m\times m$ matrices $A$ and $B$ over $\F_q$, respectively, such that $\syn{\bw} = A\syn{\bv}B$. Since $\pcm{m,q}$ consists of all nonzero mutually linearly independent vectors of length $m$ over $\F_q$, for any such matrix $B$,
there exists a monomial matrix $M \in Aut_{m,q}(C_{(m,r)})$ such that $\pcm{m,q}\,B = M\,\pcm{m,q}$. Hence, we obtain
$$
\bw\pcm{m,q}^T = \syn{\bw} = A\syn{\bv}B = A\bv\pcm{m,q}^TB = A\bv M\pcm{m,q}^T = \phi(\bv)\pcm{m,q}^T = S_{\phi(\bv)},
$$
where $\phi \in Aut_{m,q}(C_{(m,r)})$.
\end{proof}

To compute the intersection array of the code $C_{(m,r)}$ we need
the following well known result.

\begin{lemma}[\cite{lan}]\label{lem:2.7}
The number of different $r\times m$ matrices over $F_q$, of rank $k\leq \min\{r,m\}$ is
$$
\frac{M_q(k,r)M_q(k,m)}{M_q(k,k)},
$$
where
$$
M_q(k,t) =(q^t-1)(q^t-q)\cdots (q^t-q^{k-1})
$$
represents, for $1\leq k\leq t$,  the number of injective morphisms from a vector space of dimension $k$ to a vector space of dimension $t$, both over $\F_q$.
\end{lemma}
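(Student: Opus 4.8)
The plan is to count the rank-$k$ matrices by viewing each $r\times m$ matrix over $\F_q$ as a linear map $T\colon \F_q^m \to \F_q^r$ (its columns being the images of the standard basis vectors of $\F_q^m$), so that the rank of the matrix is exactly the dimension of the image of $T$. I would then count such maps by exploiting the canonical factorization of a rank-$k$ map through its $k$-dimensional image: every $T$ of rank $k$ factors as $T=\iota\circ\pi$, where $\pi\colon\F_q^m\to\F_q^k$ is surjective and $\iota\colon\F_q^k\to\F_q^r$ is injective.

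First I would count the two factors separately. By the definition of $M_q(k,r)$ recalled in the statement, the number of injective maps $\iota\colon\F_q^k\to\F_q^r$ is exactly $M_q(k,r)$. For the surjections $\pi\colon\F_q^m\to\F_q^k$, I would pass to the transpose (dual) map: $\pi$ is surjective if and only if its transpose $\F_q^k\to\F_q^m$ is injective, and transposition is a bijection between the two sets, so the number of surjections equals the number of injective maps $\F_q^k\to\F_q^m$, namely $M_q(k,m)$.

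The crux is then a double-counting argument relating these two counts to the quantity we want. I would count the set of factorization pairs $(\pi,\iota)$ in two ways. On one hand it has $M_q(k,m)\cdot M_q(k,r)$ elements. On the other hand, each rank-$k$ matrix $T$ arises from the same number of pairs: given one factorization $T=\iota\circ\pi$, every other is obtained as $(g\pi,\iota g^{-1})$ for a unique $g\in GL_k(\F_q)$, and conversely all such pairs are factorizations of $T$; hence $T$ has exactly $|GL_k(\F_q)|=M_q(k,k)$ factorizations. Dividing gives that the number of rank-$k$ matrices is $M_q(k,r)M_q(k,m)/M_q(k,k)$, as claimed.

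I expect the main obstacle to be this last step: verifying that the $GL_k(\F_q)$-action on factorizations is free and transitive on the fiber over each $T$, i.e.\ that every rank-$k$ matrix has exactly $M_q(k,k)$ factorizations, neither more nor fewer. The surjective factor $\pi$ is determined only up to composition with an automorphism of $\F_q^k$, and one must check both that no nontrivial $g$ fixes a given factorization (freeness, which uses surjectivity of $\pi$) and that the two kernels $\ker\pi=\ker T$ coincide so that any two factorizations are $GL_k(\F_q)$-related (transitivity). Equivalently, one can phrase the whole count as first choosing the $k$-dimensional column space of $T$ among the $\binom{r}{k}_q=M_q(k,r)/M_q(k,k)$ subspaces of $\F_q^r$ and then a surjection of $\F_q^m$ onto it, which isolates the same combinatorial core.
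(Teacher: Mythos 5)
The paper does not actually prove this lemma: it is quoted as a classical counting result with a citation to Landsberg, so there is no internal proof to compare against. Your argument is a correct and complete self-contained proof. Factoring a rank-$k$ map $T\colon\F_q^m\to\F_q^r$ as an injection composed with a surjection, counting injections $\F_q^k\to\F_q^r$ as $M_q(k,r)$, counting surjections $\F_q^m\to\F_q^k$ as $M_q(k,m)$ via the transpose bijection, and showing that $GL_k(\F_q)$ acts freely and transitively on the fiber of factorizations over each $T$ (freeness from surjectivity of $\pi$, transitivity because both injective factors must have image $\mathrm{im}(T)$) gives exactly the stated quotient $M_q(k,r)M_q(k,m)/M_q(k,k)$. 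Your alternative phrasing via choosing the column space among the $M_q(k,r)/M_q(k,k)$ subspaces of dimension $k$ is an equally valid shortcut; either version would serve as a legitimate proof of the cited result.
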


Now we can easily see that the number of injective morphisms from a vector space of dimension $k$ into a vector space of dimension $t$, when we fix the image of the first $k-1$ vectors in the basis of the first vector space, is given by $M'_q(k,t) = q^t-q^{k-1}$, i.e. the last factor in $M_q(k,t)$. We will say that this injective morphism has only one freedom degree.

Hence, the number of different $r\times m$ matrices over $F_q$, of rank $k\leq \min\{r,m\}$, but with only one freedom degree is:
\begin{equation}\label{gran}
\frac{M'_q(k,r)M'_q(k,m)}{M_q(1,1)}= \frac{(q^r-q^{k-1})(q^m-q^{k-1})}{(q-1)}.
\end{equation}

We need one more statement.

\begin{lemma}\label{lem:2.8}
Let $C_{(m,r)}$ be the code with parity check matrix $\pcm{m,q}$. Let $\mu_i$ be the number of cosets in $C_{(m,r)}(i)$. Then
\begin{equation}\label{ci}
\mu_i = \frac{M_q(i,r)M_q(i,m)}{M_q(i,i)}.
\end{equation}
\end{lemma}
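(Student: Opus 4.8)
The plan is to count cosets of $C_{(m,r)}$ by their distance to the code using the syndrome description already established. By Proposition~\ref{dist}, every vector $\bv \in \F_{q^r}^{\,n}$ satisfies $d(\bv,C_{(m,r)}) = \rank(\syn{\bv})$, and two vectors lie in the same coset precisely when they share the same syndrome. Since cosets are in bijection with syndromes, the number $\mu_i$ of cosets in $C_{(m,r)}(i)$ equals the number of syndromes $\syn{\bv}$ of rank exactly $i$. The key observation is that the syndrome map is surjective onto the full syndrome space: as $\bv$ ranges over $\F_{q^r}^{\,n}$, the matrix $\syn{\bv}$ ranges over \emph{all} $r\times m$ matrices over $\F_q$.

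First I would verify this surjectivity. Each coset corresponds to a distinct syndrome $\bv\pcm{m,q}^T \in \F_{q^r}^m$, and there are exactly $q^{r(n-k)} = q^{rm}$ such syndromes because $\pcm{m,q}$ has $m$ rows and full rank over $\F_q$ (the Hamming parity check matrix has rank $m$). Representing each syndrome in $\F_{q^r}^m$ as an $r\times m$ matrix over $\F_q$ via the fixed basis gives a bijection between the syndrome space and the set of \emph{all} $r\times m$ matrices over $\F_q$, of which there are exactly $q^{rm}$. This bijection is the heart of the argument and the step I expect to require the most care, since it hinges on the syndrome map being a bijection onto $\F_{q^r}^m$, which follows from $\pcm{m,q}$ being a parity check matrix of full row rank.

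Once the bijection is in place, the computation is immediate. The cosets in $C_{(m,r)}(i)$ correspond bijectively to the $r\times m$ matrices over $\F_q$ of rank exactly $i$, and Lemma~\ref{lem:2.7} counts these directly as
\[
\frac{M_q(i,r)\,M_q(i,m)}{M_q(i,i)}.
\]
Therefore $\mu_i = \dfrac{M_q(i,r)\,M_q(i,m)}{M_q(i,i)}$, which is exactly formula~(\ref{ci}).

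The main obstacle, as noted, is justifying that the syndrome representation yields every $r\times m$ matrix over $\F_q$ exactly once. I would emphasize that this rests on two facts already available: the Hamming parity check matrix $\pcm{m,q}$ has rank $m$ over $\F_q$ (so the syndrome map from $\F_{q^r}^{\,n}$ onto $\F_{q^r}^m$ is surjective with fibers equal to cosets), and the fixed-basis identification of $\F_{q^r}^m$ with the space of $r\times m$ matrices over $\F_q$ is itself a bijection preserving the $\F_q$-rank, which is precisely the quantity computed by $\rank(\syn{\bv})$ in Proposition~\ref{dist}. After establishing this correspondence, no further calculation beyond invoking Lemma~\ref{lem:2.7} is needed.
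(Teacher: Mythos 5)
Your proof is correct and follows essentially the same route as the paper: identify cosets with syndromes, use Proposition~\ref{dist} to equate distance with the $\F_q$-rank of the syndrome matrix $\syn{\bv}$, and invoke Lemma~\ref{lem:2.7} to count the rank-$i$ matrices. Your explicit justification that the syndrome map is surjective onto all $q^{rm}$ matrices in the $r\times m$ matrix space (via the full row rank of $\pcm{m,q}$) is a detail the paper leaves implicit, but the argument is the same.
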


\begin{proof}
Since Proposition~\ref{dist}, for all the vectors $\bv$ in the same $C_{(m,r)}(i)$ we have that the corresponding $r\times m$ matrix $S_{\bv}$ is of rank $i$. But, all the vectors in the same coset have the same syndrome and, hence, the number of different cosets in $C(i)$ is equal to the
number of different $r\times m$ matrices over $\F_q$ of rank $i$, which
is given by Lemma \ref{lem:2.7}.
\end{proof}

Now, it is easy to compute the intersection array for the lifted codes.

\begin{theorem}\label{theo:2.2}
Let $C_{(m,r)}$ be the code of length $n$ over the field $\F_{q^r}$ with parity check matrix $\pcm{m,q}$, where $n=\frac{q^m-1}{q-1}$.
\begin{itemize}
\item Code $C_{(m,r)}$ is a completely regular code with intersection array:
$$b_i = \frac{(q^r-q^{i})(q^m-q^{i})}{(q-1)}; \; c_i=q^{i-1}\frac{q^i-1}{q-1}.
$$
\item Codes $C_{(m,r)}$ and $C_{(r,m)}$ are, in general, different, but they have the same intersection array.
    \end{itemize}
\end{theorem}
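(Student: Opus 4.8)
The plan is to obtain $b_i$ and $c_i$ by counting, for a single representative $\bv\in C_{(m,r)}(i)$, how many of its Hamming‑neighbors lie in $C_{(m,r)}(i\pm 1)$; since $C_{(m,r)}$ is completely regular (Theorem~\ref{theo:2.1}), this count is independent of the chosen $\bv$. The key dictionary is that changing the $j$‑th coordinate of $\bv$ by a nonzero $e\in\F_{q^r}$ replaces $S_{\bv}$ by $S_{\bv}+\bar e\,h_j^T$, where $\bar e\in\F_q^r$ is the column representing $e$ in the fixed basis and $h_j\in\F_q^m$ is the $j$‑th column of $\pcm{m,q}$. Since the columns of $\pcm{m,q}$ represent all $n$ projective points (lines) of $\F_q^m$, absorbing the scaling of $h_j$ into $\bar e$ shows that $(j,e)\mapsto\bar e\,h_j^T$ is a bijection from the $n(q^r-1)$ neighbors of $\bv$ onto the set of all rank‑$1$ matrices of size $r\times m$ over $\F_q$ (the cardinalities agree, $n(q^r-1)=\frac{(q^r-1)(q^m-1)}{q-1}$). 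Hence $b_i$, resp.\ $c_i$, equals the number of rank‑$1$ matrices $M$ with $\rank(S_{\bv}+M)=i+1$, resp.\ $i-1$.

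Next I would reduce to a canonical syndrome. As in the proof of Proposition~\ref{dist}, left multiplication of $S_{\bv}$ by a nonsingular $r\times r$ matrix $A$ and right multiplication by a nonsingular $m\times m$ matrix $B$ are both realized by elements of $Aut_{(r,q)}(C_{(m,r)})$ (the former a vector‑space morphism of $\F_{q^r}$ over $\F_q$, the latter a monomial matrix coming from $\pcm{m,q}B=M\pcm{m,q}$), and these preserve weights, hence neighbor‑counts. So I may assume $S_{\bv}=\left(\begin{smallmatrix} I_i & 0\\ 0 & 0\end{smallmatrix}\right)$, whence its column space $\mathrm{Col}(S_{\bv})\subseteq\F_q^r$ and row space $\mathrm{Row}(S_{\bv})\subseteq\F_q^m$ are the spans of the first $i$ coordinate vectors. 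Writing $M=uv^T$ with $u=(u_1,u_2)$ and $v=(v_1,v_2)$ split after the $i$‑th coordinate, a short block computation of $\rank(S_{\bv}+uv^T)$ gives: the rank rises to $i+1$ exactly when $u_2\neq 0$ and $v_2\neq 0$ (equivalently $u\notin\mathrm{Col}(S_{\bv})$ and $v\notin\mathrm{Row}(S_{\bv})$); it drops to $i-1$ exactly when $u\in\mathrm{Col}(S_{\bv})$, $v\in\mathrm{Row}(S_{\bv})$ \emph{and} the scalar condition $v^Tu=-1$ holds; in all remaining cases it stays $i$.

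From here the two counts are elementary. For $b_i$ I count pairs $(u,v)$ with $u$ outside an $i$‑dimensional subspace of $\F_q^r$ ($q^r-q^i$ choices) and $v$ outside an $i$‑dimensional subspace of $\F_q^m$ ($q^m-q^i$ choices), each rank‑$1$ matrix arising from exactly $q-1$ scalings, giving $b_i=\frac{(q^r-q^i)(q^m-q^i)}{q-1}$ — precisely the one‑freedom‑degree count~(\ref{gran}) with $k=i+1$. For $c_i$ the admissible $M$ are the rank‑$1$ matrices supported on the top‑left $i\times i$ block, i.e.\ $M=\left(\begin{smallmatrix} u_1v_1^T & 0\\ 0 & 0\end{smallmatrix}\right)$ with $v_1^Tu_1=-1$; there are $q^i-1$ choices for $u_1\neq 0$ and $q^{i-1}$ choices for $v_1$ on the affine hyperplane $v_1^Tu_1=-1$, and dividing by $q-1$ yields $c_i=q^{i-1}\frac{q^i-1}{q-1}$. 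As a check one can confirm the handshake identity $\mu_{i-1}b_{i-1}=\mu_i c_i$ using Lemma~\ref{lem:2.8}.

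The main obstacle is the rank‑drop condition behind $c_i$: unlike the rank‑rise case, membership of $u$ and $v$ in the column and row spaces is not sufficient, and one must isolate the extra trace condition $v^Tu=-1$ and verify it produces rank $i-1$ (not $i$) while the other sub‑cases with $u\in\mathrm{Col}(S_{\bv})$, $v\in\mathrm{Row}(S_{\bv})$ leave the rank at $i$. Finally, for the second bullet the conclusion is immediate from the formulas: $c_i$ does not involve $r$ or $m$ at all, $b_i$ is manifestly symmetric under $r\leftrightarrow m$, and $\rho=\min\{r,m\}$ is symmetric, so $C_{(m,r)}$ and $C_{(r,m)}$ share the same intersection array; yet for $m\neq r$ they are different codes, having different lengths $\frac{q^m-1}{q-1}\neq\frac{q^r-1}{q-1}$ and living over different fields $\F_{q^r}\neq\F_{q^m}$.
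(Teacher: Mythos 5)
Your proposal is correct, and it overlaps with the paper's argument in its core dictionary (neighbors of $\bv$ correspond bijectively to rank-one perturbations of the syndrome matrix $S_{\bv}$, so $b_i$ is the count of rank-one $M$ raising the rank, which yields $\frac{(q^r-q^i)(q^m-q^i)}{q-1}$ exactly as in~(\ref{gran})), but it diverges in how $c_i$ is obtained. The paper never analyzes the rank-drop case directly: it computes only $b_i$, then extracts $c_i$ from the handshake identity $\mu_{i-1}b_{i-1}=\mu_i c_i$ using the coset counts of Lemma~\ref{lem:2.8}, so Lemma~\ref{lem:2.7} is load-bearing there. You instead normalize $S_{\bv}=\left(\begin{smallmatrix} I_i & 0\\ 0 & 0\end{smallmatrix}\right)$ and classify $\rank(S_{\bv}+uv^T)$ by a block computation, isolating the condition $v^Tu=-1$ (via the matrix determinant lemma, with kernel $\la u_1\ra$ giving rank exactly $i-1$) and counting the affine hyperplane of solutions to get $c_i=q^{i-1}\frac{q^i-1}{q-1}$ directly; the handshake identity becomes a consistency check rather than the derivation. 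Your route is longer but more self-contained and makes rigorous the paper's somewhat informal ``one freedom degree'' justification for $b_i$; the paper's route is shorter but leans on Lemma~\ref{lem:2.8} and on already knowing $\mu_i$. Both treatments of the second bullet are the same symmetry observation. One small point of care: your reduction to canonical $S_{\bv}$ via $Aut_{(r,q)}$ is not even needed once you invoke complete regularity from Theorem~\ref{theo:2.1} --- you may simply choose a representative $\bv$ whose syndrome is already $\left(\begin{smallmatrix} I_i & 0\\ 0 & 0\end{smallmatrix}\right)$, which exists by surjectivity of the syndrome map.
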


\begin{proof}
Directly from the definition of $b_i$ and $c_i$ we have:
\begin{equation}\label{bro}
\mu_ib_i = \mu_{i+1}c_{i+1}.
\end{equation}
For each index $i\in \{0,1,\cdots,\rho\}$ we know that $a_i+b_i+c_i=(q^r-1)n$ and so, for any index, it is only necessary to compute one of the values, for instance $b_i$, since the other two values comes from the last equality and~(\ref{bro}).

We begin computing $b_0$, so the number of vectors in $C_{(m,r)}(1)$ which are at distance one from one given vector in $C_{(m,r)}$. Without losing generality (since $C_{(m,r)}$ is completely regular) we can fix vector $\bo$ in $C_{(m,r)}$ and count how many vectors there are at distance one in $C_{(m,r)}(1)$. The answer is immediately
\[
n\,(q^r-1) = \frac{(q^r - 1)(q^m - 1)}{q-1}.
\]
Indeed, from Proposition~\ref{dist}, we need to count how many cosets have syndromes with rank one or, in another way, how many $r\times m$ matrices of rank one there are. From Lemma~\ref{lem:2.7} the result is $(q^r-1)(q^m-1)/(q-1)$.

Now, we know $\mu_1$ and $b_0$ and so, from~(\ref{bro}), it is easy to compute $c_1 =\dd \frac{\mu_0b_0}{\mu_1}=1$ and $a_1=(q^r-1)n-b_1-c_1$.

In general, let $1 \leq i \leq \rho$. We use the same argumentation. The value $b_i$
is the number of different $r\times m$ matrices over $F_q$, of rank $i\leq \rho$, with exactly one freedom degree. We obtain from~(\ref{gran}), that
$$
b_i = \frac{M'_q(i+1,r)M'_q(i+1,m)}{M_q(1,1)}= \frac{(q^r-q^{i})(q^m-q^{i})}{(q-1)}\,.
$$
Now, using the expressions for $b_{i-1}$, $\mu_i$ and $\mu_{i-1}$ from Lemma \ref{lem:2.8}, we obtain
$$
c_i = \frac{\mu_{i-1}b_{i-1}}{\mu_i}= q^{i-1}\frac{(q^{i}-1)}{q-1}\,.
$$

Finally, we note that all the values $b_i, c_i$ of the above intersections arrays are symmetric for $r$ and $m$, so codes $C_{(r,m)}$ and $C_{(m,r)}$ are, in general, different, but they have the same intersection numbers $b_i, c_i$ (and different values for $a_i$ for the case $r \neq m$).
\end{proof}

It is interesting to notice that perfect codes are the only class of
nontrivial codes (i.e. $q$-ary codes with cardinality $1 < |C| < q^{n-1}$
and with minimum distance $d \geq 3$) whose codes, obtained by lifting the
ground field, are completely regular.

\begin{theorem}\label{theo:2.3}
Let $C$ be the nontrivial code of length $n$ over the field $\F_q$ with
minimum distance $d \geq 3$, with covering radius $\rho \geq 1$ and parity check matrix $H$. Let $C_r$ be the code over $\F_{q^r}$, whose
parity check matrix is $H$. Then $C_r$ is completely regular,
if and only if $C$ is a Hamming code.
\end{theorem}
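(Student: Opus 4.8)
The plan is to prove the two implications separately. The forward direction, ``$C$ Hamming $\Rightarrow C_r$ completely regular'', is exactly Theorem~\ref{theo:2.1}, so all the content is in the converse, which I would prove in contrapositive form: if $C$ is not Hamming then $C_r$ is not completely regular. Since $d\geq 3$, the code $C$ is Hamming precisely when it is perfect, i.e. when its covering radius is $1$, i.e. when every nonzero $t\in\F_q^{\,n-k}$ is a scalar multiple of a column of $H$. So I assume the covering radius of $C$ is at least $2$ and first extract a convenient witness. Restricting a coset leader of weight $\geq 2$ to two of its support coordinates produces a weight-two coset leader: otherwise that two-column piece would have syndrome proportional to a single column and could be shortened, contradicting the minimality of the leader's weight. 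Thus there is a vector $t_0=\lambda H_i+\mu H_j\in\F_q^{\,n-k}$ that is a coset leader but is not proportional to any single column of $H$.

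Next I would reorganise everything around syndromes, as in Proposition~\ref{dist}. Identify each coset of $C_r$ with its $r\times(n-k)$ matrix $\syn{v}$ over $\F_q$; then $d(\bv,C_r)$ is the least number of ``column-type'' matrices $\bc\,H_l^{T}$ (the syndromes of weight-one vectors) summing to $\syn{v}$. The argument of Lemma~\ref{lem:2.1} gives $d(\bv,C_r)\geq\rank(\syn{v})$ for \emph{every} code, and the crucial observation is that $C$ is Hamming exactly when equality holds for all $\bv$: equality for every rank-one $\bc\,t^{T}$ forces $d(\bc\,t^{T})=1$, i.e. every $t$ proportional to a column. So it suffices to make the failure of equality at $t_0$ visible to the intersection numbers. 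I would also record the standard fact that in a completely regular code all cosets lying in the same $C_r(i)$ have the same weight distribution; this is the property I intend to violate.

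The main computation sits at distance two. The shell $C_r(2)$ contains cosets of syndrome-rank one, such as the coset with $\syn{v}=\bc\,t_0^{T}$ (distance $2$ but rank $1$), and cosets of syndrome-rank two, such as $\bc\,H_i^{T}+\bc'\,H_j^{T}$ (distance equal to rank). Counting decompositions $\syn{v}=\sigma+\sigma'$ into two weight-one syndromes, the rank-one coset has $c_2=2N_2$, where $N_2$ is the number of weight-two $\F_q$-representations of $t_0$, while a rank-two coset supported on the projective line spanned by $H_i,H_j$ has $c_2=\kappa(\kappa-1)$, where $\kappa$ is the number of columns of $H$ on that line; moreover $2N_2\geq\kappa(\kappa-1)$ always. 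This finishes two cases: if $d=3$ but $C$ is imperfect then the numbers $\kappa$ are not all equal, or $t_0$ has a representation off its own line, so $c_2$ already varies among the rank-two, resp. rank-one, cosets; and if $d=4$ a minimum-weight codeword yields a $t_0$ with $N_2\geq 2$, so $2N_2>\kappa(\kappa-1)=2$. In each case $c_2$ is not constant on $C_r(2)$ and $C_r$ is not completely regular.

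The step I expect to be the real obstacle is large minimum distance. When $d\geq 5$ the packing radius is at least $2$, every coset up to weight two has a unique leader, and the distance-two shell is ``generic'' ($N_2=1$ and $\kappa=2$), so the rank-one and rank-two counts coincide and no contradiction survives at distance two. The failure then first appears at the shell just beyond the packing radius, where minimum-weight codewords of $C$ begin to interact with coset leaders. There I would drop the single number $c_2$ and compare the full weight enumerators of two equal-distance cosets: along the rank-one family $\bc\,t^{T}$ the $C_r$-distance equals the original coset weight $w_C(t)$, so the enumerator is governed by the coset structure of $C$ itself, whereas for a full syndrome-rank coset the enumerator is governed by the rank (bilinear-forms) structure underlying Theorem~\ref{theo:2.2}. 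Proving that these two enumerators can agree on every shell only when $C$ is perfect with $d=3$ --- equivalently, that ``$d(\syn{v})>\rank(\syn{v})$'' must eventually be detected by the weight distribution --- is the crux. I would attack it by induction on the shell index, exploiting that the rank-one family reproduces, inside the putatively equitable partition of $C_r$, the distance partition of $C$, which is not equitable once the covering radius of $C$ exceeds $1$.
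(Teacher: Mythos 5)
Your forward implication (an appeal to Theorem~\ref{theo:2.1}) matches the paper, and for the converse you have correctly located the right configuration: two cosets at distance two from $C_r$, one with syndrome of rank one and one with syndrome of rank two, whose local structure must be compared. The genuine gap is that the invariant you chose to separate them, the intersection number $c_2$, is too weak, and you say so yourself: for $d\geq 5$ the distance-two shell is generic, $c_2$ does not vary there, and your argument collapses. What you offer for that case --- comparing full coset weight enumerators ``on the shell just beyond the packing radius'' by an induction on the shell index that you never set up --- is a plan of attack, not a proof. Since the theorem must cover every nontrivial $C$ with $d\geq 3$, and every code with packing radius $e\geq 2$ falls into the unproved case, the proposal is incomplete. (The $d=3$ subcase also leans on unproved counting claims, e.g.\ that $2N_2\geq\kappa(\kappa-1)$ and that imperfection forces either the $\kappa$'s to vary or some $t_0$ to have an off-line representation; these are plausible but not established.)

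The paper closes exactly this gap with a short argument, uniform in $d$, that replaces $c_2$ by a weight-distribution count driven by Lemma~\ref{lem:2.9}. Take $\bx$ of weight two with both nonzero entries in $\F_q$, lying under a codeword $\bv\in C$ of weight $w\geq 4$, and let $\bx'$ be a weight-two vector with one nonzero coordinate in $\F_q$ and the other in $\F_{q^r}\setminus\F_q$. Both cosets $C_r-\bx$ and $C_r-\bx'$ have weight two, but Lemma~\ref{lem:2.9} forces every minimum-weight codeword of $C_r$ to be $\beta\,\bc'$ with $\bc'$ of minimum weight in $C$, so all of its nonzero entries lie in a single multiplicative coset $\beta\,\F_q^*$; hence no minimum-weight codeword of $C_r$ can cover $\bx'$, whereas $\bx$ is covered by codewords inherited from $C$. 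The two cosets therefore have different weight distributions while lying in the same $C_r(2)$, which already contradicts complete regularity. If you want to salvage your approach, the statistic that separates the rank-one coset from the rank-two coset for all $d$ is the number of low-weight vectors in the coset (equivalently, the number of minimum-weight codewords of $C_r$ covering the coset leader), not the number of neighbours in $C_r(1)$.
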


\begin{proof}
Since Theorem \ref{theo:2.1}, it is enough to show that when $C$ is not a Hamming code, lifting the ground field does not give a
completely regular code $C_r$.

Take a vector $\bx$ over $\F_q$ of weight two, which is covered by some codeword $\bv \in C$ of weight $w \geq 4$ (this is possible also for any
code $C$ with $d=3$ and $n > 3$, since $C$ is not perfect). Then, from Lemma~\ref{lem:2.9}, the vector $\bx'$ of weight two, having one nonzero coordinate from $\F_q$ and the other from $\F_{q^r} \setminus \F_q$ is not covered by any codeword of minimum weight. Now the two cosets $D = C_r - \bx$
and $D' = C_r - \bx'$ (both of weight two) have different weight distributions, which is impossible for completely regular code.
\end{proof}

\medskip

\begin{remark}
Any code $C_{(m,sr)}$ contains as a subcode the code
$C_{(m,r)}$ where $s,r \geq 1$ are arbitrary natural numbers. Hence, our construction induces, for any prime power $q$, an infinite family of
nested completely regular codes:
\[
C_{(m,r)} \subset C_{(m,ar)} \subset
C_{(m,a^2\,r)} \subset \ldots
\]
This nested family of codes induces, in turn,
infinite nested families of regular and completely regular partitions (see \cite[Sec. 11.1]{Bro2}) of completely regular codes into completely regular subcodes.
\end{remark}

\medskip

Finally, we end this paper with a comment about distance regular graphs.
\begin{remark}
It is well known that any linear completely regular code $C$ implies the existence of a coset distance-regular graph. From the completely regular codes $C_{(r,m)}$ described in this paper, we obtain distance-regular graphs with classical parameters (see \cite{Bro2}) which are distance-transitive since they come
from $(r,q)$-completely transitive codes. These graphs have $v = q^{r\,m}$
vertices, diameter $\rho = \min\{r,m\}$, and intersection array given by
$$
b_i = \frac{(q^r-q^{i})(q^m-q^{i})}{(q-1)}; \; c_i=q^{i-1}\frac{q^i-1}{q-1},
$$
where $0 \leq i \leq d$.

Notice that bilinear forms graphs~\cite[Sec. 9.5]{Bro2} have the same parameters and are distance-transitive too. These graphs are uniquely
defined by their parameters (see \cite[Sec. 9.5]{Bro2}). But, we did
not find in the literature (in particular in \cite{Dels}, where  the
association schemes, formed by bilinear forms, have been introduced and
their application to coding theory have been considered) such a simple description of these graphs, as coset graphs of completely regular codes, constructed by lifting Hamming codes.
\end{remark}

\end{document}